\newtheorem{lem}{Lemma}
\begin{document}
\begin{frontmatter}
\title{Flipping the Winner of a Poset Game}

\author{Adam O. Kalinich}
\ead{akalinich@imsa.edu}
\address{Illinois Math and Science Academy, 1500 Sullivan Road, Aurora, Illinois 60506}

\begin{abstract}
Partially-ordered set games, also called poset games, are a class of two-player combinatorial games. The playing field consists of a set of elements, some of which are greater than other elements. Two players take turns removing an element and all elements greater than it, and whoever takes the last element wins. Examples of poset games include Nim and Chomp. We investigate the complexity of computing which player of a poset game has a winning strategy. We give an inductive procedure that modifies poset games to change the nim-value which informally captures the winning strategies in the game. For a generic poset game $G$, we describe an efficient method for constructing a game $\neg G$ such that the first player has a winning strategy if and only if the second player has a winning strategy on $G$. This solves the long-standing problem of whether this construction can be done efficiently. This construction also allows us to reduce the class of Boolean formulas to poset games, establishing a lower bound on the complexity of poset games. 
\end{abstract}
\end{frontmatter}

%\section{Focusing Question}
%What is the complexity of computing which player has a winning strategy in a poset game?

\section{Introduction}

\subsection{Definition of a Poset Game}

A \textbf{p}artially-\textbf{o}rdered \textbf{set} consists of a set $V$ of elements $\{v_1,v_2,v_3,...\}$, and an ordering relation $\ge$.  This ordering relation is reflexive, transitive, and anti-symmetric. It is possible that for some elements, neither $v_i \ge v_j$ nor $v_j \ge v_i$ holds. A poset game is a two-player game played on a poset. A move in a poset game consists of choosing an element $v_i$ and removing all elements $v_j$ such that $v_j \ge v_i$, so that a smaller poset remains. The two players alternate moves. If a player cannot move because all elements have been removed, that player loses.

\subsection{Impartial Games and Nim-Values}

%\begin{figure}[ht]
%\begin{minipage}[b]{0.3\linewidth}
%\centering
%\includegraphics[scale=0.25]{Superset.jpg} \\ {Poset $G$, subsets of $\{1,2,3\}$ with the superset relation} \\ 
%\includegraphics[scale=0.2]{UglySuperset.jpg} \\
%\end{minipage}
%\hspace{0.5cm}
%\begin{minipage}[b]{0.65\linewidth}
%\indent In this paper, poset games will often be represented as vertices with directed edges, pointing from the smaller to the larger element, between some vertices. A vertex that is greater than another element are placed "above" it, so that every directed edge is pointing in an upward direction. This does not place any additional limitations on allowed poset games, since a graph can be constructed by repeatedly by placing all elements for which there exists no different greater element in the highest row, and construct the smaller poset game of the remaining vertices below. Consider the poset to the left, defined as the power set of the set $\{1,2,3\}$, with $v_i \ge v_j$ iff $v_j \subseteq v_i$. You can see two equivalent ways to draw the poset described. The second drawing has every relation drawn in, whereas the first only has the minimum necessary to deduce the rest by transitivity. To keep the diagrams simpler, only the minimum number of relations will be drawn.
%\end{minipage}
%\end{figure}

Many different kinds of Poset games have been studied by mathematicians. The oldest such game is Nim, the winning strategy of which was discovered in 1902 \cite{bouton1901nim}. A game of Nim is played on several piles of elements, and a move consists in taking any number of elements from any one pile. The player who cannot move loses. Nim was solved by finding a simple method to determine if any position was winning or losing. The nim-sum, denoted $\oplus$, of two numbers is computed by representing the numbers in binary and taking the bitwise parity of the two. In nim, the nim-value of a position is the nim-sum of the number of elements in each pile. If a position does not have nim-value 0, elements can be removed from a pile so that the nim-value is 0. The next player must change the nim-value to something else, but then the other player can just change it back to 0. Therefore, the player who first moves from a position of a nim-value of 0 can be forced to move from every position of nim-value 0 reached in the game. The empty position without any elements left has nim-value 0, so the player who first moved from a position of nim-value 0 can be forced to lose. If the nim-sum of the number of elements in each of the piles is 0,  then the game is winning, and if it is not 0, then it is losing. See \cite{bouton1901nim} for a more detailed explanation and proof of the winning strategy.

A pile of $x$ elements in Nim can be seen as an $x$-tower in a poset game, where an $x$-tower is a tower of $x$ vertices with $v_1 \le v_2 \le v_3 \le ... \le v_x$. Since any game of Nim can be represented like this, Nim is actually a specific type of poset game. Other types of poset games include Schuh's game of divisors\cite{Schuh52}, proposed in 1952 by F. Schuh, played with a number and all its divisors expect 1, where a move consists of taking away a divisor and each of its multiples. Another example is Chomp, proposed in 1974 by D. Gale \cite{Gale74}, played on a rectangular grid with the upper left square missing, where a move consists of taking away a square and all squares below and to the right of it.

\subsection{Poset Games}

The Sprague-Grundy theorem, discovered independently by Sprague \cite{Sprague35} and Grundy \cite{Grundy39} states that impartial games where the last player to move wins are equivalent to a pile in nim. The nim-value $Nim(G)$ of any impartial game $G$ is defined recursively as the smallest non-negative integer for which there does not exist a move to a game of that nim-value. Since only finite poset games will be considered, and poset games are impartial and have last-player-win rules, the Sprague-Grundy theorem can be applied. By the Sprague-Grundy theorem, if two poset games, $G_1$ and $G_2$, are put next to each other to make a new game, then the resulting game has a nim-value of $Nim(G_1) \oplus Nim(G_2)$. The second player has a winning strategy on a poset game $G$ if and only if $Nim(G)=0$

%Let us consider the poset consisting of divisors of 36, with the relation being divisibility.

%\begin{center}
%\includegraphics[scale=0.3]{DivisorsA.jpg} \\
%\end{center}

%The first player can win the game by choosing 1, taking every vertex on the first move. So, we will modify the game to make it more interesting, by removing 1 from the poset.

%\begin{center}
%\includegraphics[scale=0.3]{DivisorsB.jpg} \\
%\end{center}

%The nim-value of the empty set is zero, so the nim value of $\{2\}$ or $\{3\}$  is 1. From both $\{2,4\}$ and $\{3,9\}$, moves exist to the empty set, with nim-value 0, and a position of nim-value 1, $\{2\}$ and $\{3\}$ respectively. So, both $\{2,4\}$  and $\{3,9\}$  have nim-value 2. Although $\{2,3\}$ has moves to two positions of nim-value 1, there is no move to a position with a nim-value 0, so $\{2,3\}$  has nim-value 0. From $\{2,3,6\}$, moves exist to two positions of nim-value 1 ($\{2\}$  and $\{3\}$), and a position of nim-value 0 ($\{2,3\}$), so $\{2,3,6\}$ has nim-value 2. $\{2,3,4\}$ and $\{2,3,9\}$ have nim-values of 3, since there exist moves to positions of nim-value 0, 1, and 2. By continuing this way, we can find the nim-value of the intial position. It turns out to be 5, meaning the first player has a winning move. We leave it as an exercise to the reader to compute the intermediate positions and find the winning move. Because stages of a poset game are subsets of nodes, it make take exponential time in general.

\section{Constructing $\neg G$}

\subsection{Overview of the Construction}

	Let Poset game $G$ have size $g$. We will describe a method for constructing a game $\neg G$ such that the first player has a winning strategy if and only if the second player has a winning strategy on $G$. $Nim(G)$ does not need to be computed to construct $\neg G$. We will add vertices to construct a game $G'$ such that $Nim(G')$, when expressed in binary, has a $1$ in a known digit if and only if $Nim(G) \ne 0$. We will use this to construct a game $G'''$ with a $Nim(G''')$ equal to $2^{\lfloor {Log_2(g)} \rfloor+3}$ if $G$ was winning, and 0 otherwise. By putting this next to a game with nim-value $2^{\lfloor {Log_2(g)} \rfloor+3}$, we have constructed a game $G''''$ in which the first player has a winning strategy if and only if the second player has a winning strategy on $G$.

\subsection{Lemmas Necessary for the Construction}

A bottom vertex is a vertex that is less than all other vertices in the game. The process of adding a bottom vertex consists of adding a vertex that is defined as less than all other vertices at the time it is added. If more vertices are added, they may be defined as less than, greater than, or incomparable to this vertex.

\begin{lem}
Adding a bottom vertex increases the nim-value of a poset game by 1.
\end{lem}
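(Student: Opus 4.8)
The plan is to prove this by induction on the size of the poset game $G$, using the Sprague-Grundy characterization of the nim-value as the minimal excludant of the nim-values of the options. Let $G^{+}$ denote the game obtained from $G$ by adding a bottom vertex $b$; I want to show $Nim(G^{+}) = Nim(G) + 1$. The base case is $G$ empty: then $G^{+}$ is a single vertex, which has nim-value $1 = 0 + 1$, as required.

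For the inductive step, I would analyze the moves available in $G^{+}$. There are exactly two kinds. First, a player may remove the bottom vertex $b$; since $b$ is below everything, this removes all of $G^{+}$ and leaves the empty game, of nim-value $0$. Second, a player may play in the ``$G$ part,'' choosing some vertex $v$ of $G$ and removing $v$ together with everything above it; because $b$ is below all of these, $b$ survives, and the resulting position is exactly $H^{+}$ where $H$ is the corresponding option of $G$. By the induction hypothesis, each such option has nim-value $Nim(H) + 1$. Therefore the set of nim-values of options of $G^{+}$ is $\{0\} \cup \{\, Nim(H) + 1 : H \text{ an option of } G \,\}$.

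Now I invoke the definition of $Nim(G)$ as the mex (minimal excludant) of $\{Nim(H) : H \text{ an option of } G\}$: this means every value in $\{0, 1, \dots, Nim(G) - 1\}$ occurs as the nim-value of some option of $G$, while $Nim(G)$ itself does not. Shifting by one, the set $\{\, Nim(H)+1 \,\}$ contains every value in $\{1, 2, \dots, Nim(G)\}$ and omits $Nim(G) + 1$. Tacking on the extra element $0$, the option-value set of $G^{+}$ is precisely $\{0, 1, \dots, Nim(G)\}$ with $Nim(G)+1$ absent, so its mex is $Nim(G) + 1$. This is exactly $Nim(G^{+})$, completing the induction.

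The only subtle point — and the step I would be most careful about — is the claim that playing vertex $v$ in $G^{+}$ yields $H^{+}$ for the matching option $H$ of $G$, i.e.\ that ``adding a bottom vertex'' commutes with ``taking an option in $G$.'' This needs the observation that $b$ is never removed by such a move (nothing above $v$ can equal or precede $b$) and that the induced order on the surviving vertices is unchanged; it also uses that $b$ remains a bottom vertex of the smaller game, so the induction hypothesis genuinely applies. Everything else is bookkeeping with the mex definition.
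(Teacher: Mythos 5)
Your proof is correct and follows essentially the same route as the paper: induction on the size of the game, observing that the options of the augmented game are the empty game (value $0$, from removing the bottom vertex) together with the bottom-augmented options of $G$ (values $Nim(H)+1$ by induction), and then applying the mex definition. Your version is if anything slightly cleaner, starting the induction from the empty game and stating explicitly that taking an option commutes with adding the bottom vertex, but the underlying argument is the same as the paper's.
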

\begin{proof}
We will use induction on j, the size of the game. For the base case, a poset game of size 1 has a nim-value of 1, and when a bottom vertex is added, the nim-value is 2. For a poset game of size $j$ with a nim-value of $k$, there exist moves to games of nim-values 0,1,2,3,...,$k-1$, but no move to a game of nim-value $k$ exists. By the induction hypothesis, the nim-values of these games change to 1,2,3,4,...,$k$ when a bottom vertex is added, since these games all have fewer vertices than the original game. A move to a zero game exists, by taking away the bottom vertex. No move to a game of nim-value k+1 exists, since no game with nim-value $k$ existed before the bottom vertex was added. Since there are moves to games of nim-values 0,1,2,3,…$k$, but not $k+1$, the nim-value has changed from $k$ to $k+1$ by adding a bottom vertex. This proves the lemma.
\end{proof}

\begin{lem}
An $x$-tower has a nim-value of $x$.
\end{lem}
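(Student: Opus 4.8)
The plan is to induct on $x$, with Lemma 1 doing essentially all of the work. For the base case I would take $x = 0$: the empty poset game admits no moves, so its nim-value is $0$ by definition. (If one prefers to start at $x = 1$, a single vertex admits only the move to the empty game, so its nim-value is the least non-negative integer not in $\{0\}$, namely $1$ — consistent with the base case of Lemma 1 itself.)

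For the inductive step, the key observation is that an $x$-tower is precisely an $(x-1)$-tower with a bottom vertex adjoined: writing the tower as $v_1 \le v_2 \le \cdots \le v_x$, the sub-poset on $\{v_2, \dots, v_x\}$ is an $(x-1)$-tower, and $v_1$ lies below every other vertex, so it is a legitimate bottom vertex in the sense used by Lemma 1. By the induction hypothesis the $(x-1)$-tower has nim-value $x-1$, and by Lemma 1 adjoining a bottom vertex raises the nim-value by exactly $1$; hence the $x$-tower has nim-value $x$.

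As a cross-check (or an alternative proof that avoids Lemma 1), one can argue directly: the moves available in an $x$-tower are exactly the choices of $v_i$ for $i = 1, \dots, x$, where choosing $v_i$ deletes $v_i, \dots, v_x$ and leaves an $(i-1)$-tower. By induction these successor positions have nim-values $0, 1, \dots, x-1$, so the nim-value of the $x$-tower is the least non-negative integer missing from that set, namely $x$. I do not expect a genuine obstacle here — the statement is more or less a corollary of Lemma 1, and it also squares with the remark in the introduction that an $x$-tower encodes a Nim pile of size $x$, whose Grundy value is $x$. The only point deserving a moment's care is stating the "tower minus its bottom vertex is a shorter tower" decomposition cleanly enough that Lemma 1 applies verbatim.
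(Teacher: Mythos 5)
Your argument is correct, but your primary route differs from the paper's. The paper proves this lemma directly, by the same argument you relegate to your ``cross-check'': a move in a $k$-tower consists of choosing some $v_i$ and leaving an $(i-1)$-tower, so the options are towers of sizes $0$ through $k-1$, which by induction have nim-values $0$ through $k-1$, and the mex is $k$. You instead derive the statement as a corollary of Lemma~1, decomposing the $x$-tower as an $(x-1)$-tower with a bottom vertex adjoined, so that Lemma~1 raises the nim-value from $x-1$ to $x$. Both proofs are sound; yours buys economy by reusing machinery already established, while the paper's direct argument is self-contained and does not lean on Lemma~1 at all --- a mild advantage, since the move structure of a tower is so transparent that invoking the (more delicate) bottom-vertex lemma is arguably heavier than the problem requires. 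One small point if you keep the Lemma~1 route: start the induction at $x=1$ (as you note parenthetically), since the paper's statement and proof of Lemma~1 implicitly concern adding a bottom vertex to a nonempty game, so applying it to the empty poset at $x=0$ would need a word of justification.
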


\begin{proof} 
We will use induction on k. A $1$-tower has a nim-value of $1$. From a $k$-tower there exist moves to towers of every size from $0$ to $k-1$, and by the induction hypothesis those games have nim-values of every value from $0$ to $k-1$. Thus, the $k$-tower has nim-value $k$, proving the lemma.
\end{proof}

We have two operations for modifying nim-values in a predictable way. When we add a bottom vertex, we increase the nim-value by one, so adding $x$ bottom vertices in sequence to a a game $G$ will result in a game with nim-value of $Nim(G)+x$.  By putting an $x$-tower next to a game $G$, we get a game with a nim-value of $Nim(G) \oplus Nim(x$-tower$)=Nim(G) \oplus x$. By using the operations $\oplus x$ and $+x$ in pairs, we can modify positive nim-values without modifying the nim-value of games with nim-value 0. We will first describe how the process affects games with non-zero nim-values, and then we will show how it affects games with nim-value 0.

\subsection{Constructing $\neg G$}

First, we will look at the case where the nim-value is not zero, and show a construction that flips the winner. After, we will show that this construction also flips the winner if the nim-value is zero.
\newtheorem{thm}{Theorem}
\begin{thm}For any poset game $G$ with g vertices, if $Nim(G) \neq 0$, the following procedure will create a game $\neg G$ with $Nim(\neg G)=0$:

1. Set $i=0$

2. While $2^i \leq g$, add $2^i$ bottom vertices, then put it next to a $2^i$-tower, then increment $i$ by 1.

3. Let the resulting graph be $G'$

4. Add $2^i$ bottom vertices, and put it next to a $2^i$-tower to get $G''$. Increment $i$ by 1.

5. Put $G'$ next to $G''$ and a $2^i$ tower to get $\neg G$
\end{thm}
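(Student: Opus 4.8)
The plan is to track how a single round of the Step 2 loop --- adding $2^i$ bottom vertices and then appending a $2^i$-tower --- acts on the nim-value. By Lemma 1 applied $2^i$ times, the first operation sends the current nim-value $v$ to $v+2^i$; by Lemma 2 together with the Sprague--Grundy theorem, appending a $2^i$-tower then sends it to $(v+2^i)\oplus 2^i$. So I would define $f_i(v) := (v+2^i)\oplus 2^i$; the whole construction is a prescribed composition of the maps $f_i$ followed by the two $\oplus$-operations of Step 5. The first thing to record is the closed form: $f_i(v)=v$ when the $2^i$-bit of $v$ is $0$, and $f_i(v)=v+2^{i+1}$ when it is $1$. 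This is a two-line case check --- adding $2^i$ either sets the $2^i$-bit with no further carry, or clears it and carries, and the subsequent XOR toggles that bit back.

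Next I would isolate the key lemma: writing $g_i := f_i\circ f_{i-1}\circ\cdots\circ f_0$, we have $g_i(0)=0$ and $g_i(x)=2^{i+2}-x$ for all $1\le x\le 2^{i+1}-1$. I would prove this by induction on $i$. The base case $i=0$ is immediate, since $f_0(1)=2\oplus 1=3=2^2-1$. For the inductive step, using $g_i=f_i\circ g_{i-1}$: for $x\in[1,2^i-1]$ the hypothesis gives $g_{i-1}(x)=2^{i+1}-x\in(2^i,2^{i+1})$, whose $2^i$-bit is $1$, so the closed form for $f_i$ yields $g_i(x)=2^{i+2}-x$. The range $x\in[2^i,2^{i+1}-1]$ is the delicate part: here I would first prove the auxiliary identity $f_j(2^i+z)=2^i+f_j(z)$ for every $j<i$ and every $z\ge 0$ (adding $2^i$ leaves all bit positions below $i$ untouched, so the $2^j$-bit of $2^i+z$ matches that of $z$), from which $g_{i-1}(2^i+y)=2^i+g_{i-1}(y)$; then writing $x=2^i+y$ with $0\le y\le 2^i-1$ and substituting the hypothesis for $g_{i-1}(y)$ reduces the claim to one more application of the closed form for $f_i$.

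Granting the key lemma, the theorem follows by a short computation. Since $G$ has $g$ vertices one has $1\le n\le g$ for $n:=Nim(G)$ --- an easy induction, since every move deletes at least one vertex, so the reachable nim-values lie in $\{0,\dots,g-1\}$ --- and hence, with $L:=\lfloor\log_2 g\rfloor$, we get $1\le n\le 2^{L+1}-1$. The loop of Steps 1--3 applies $f_0,\dots,f_L$ in order, so $Nim(G')=g_L(n)=2^{L+2}-n$. Step 4 applies $f_{L+1}$; since $2^{L+2}-n$ lies in $(2^{L+1},2^{L+2})$ its $2^{L+1}$-bit is $1$, so $Nim(G'')=(2^{L+2}-n)+2^{L+2}=2^{L+3}-n$. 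Finally, by Sprague--Grundy, $Nim(\neg G)=Nim(G')\oplus Nim(G'')\oplus 2^{L+2}=(2^{L+2}-n)\oplus(2^{L+3}-n)\oplus 2^{L+2}$; since $2^{L+3}-n=(2^{L+2}-n)+2^{L+2}$ and $2^{L+2}-n<2^{L+2}$, the numbers $2^{L+2}-n$ and $2^{L+3}-n$ differ in exactly the $2^{L+2}$-bit, so their XOR is $2^{L+2}$ and $Nim(\neg G)=2^{L+2}\oplus 2^{L+2}=0$.

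I expect the main obstacle to be precisely the induction establishing $g_i(x)=2^{i+2}-x$, and within it the bookkeeping for inputs in $[2^i,2^{i+1}-1]$, where one cannot apply the induction hypothesis to $g_{i-1}$ directly but must route through the auxiliary ``affineness in the high bit'' identity; everything else is routine binary arithmetic together with one appeal to the Sprague--Grundy theorem.
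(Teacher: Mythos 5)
Your proof is correct, and at the top level it follows the same strategy as the paper: both analyses track the effect of each paired operation (add $2^i$ bottom vertices, then adjoin a $2^i$-tower), i.e.\ the map $v \mapsto (v+2^i)\oplus 2^i$, observe the two-case behavior (identity when the $2^i$-bit is $0$, addition of $2^{i+1}$ when it is $1$), and finish by arranging that $Nim(G')$ and $Nim(G'')$ differ in exactly one known bit, so the final tower cancels their XOR. Where you genuinely diverge is in the key lemma. The paper argues qualitatively: it maintains the invariant that after acting on bit $k$ there is always a ``1'' strictly to the left of the last digit acted on (via a three-case carry analysis), and then argues, somewhat delicately, that this ``1'' ends up in one specific digit of $Nim(G')$ with no interference from higher digits. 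You instead prove an exact closed form, $g_i(0)=0$ and $g_i(x)=2^{i+2}-x$ for $1\le x\le 2^{i+1}-1$, by induction, handling the upper half of the range through the translation identity $f_j(2^i+z)=2^i+f_j(z)$ for $j<i$. This buys two things: the endgame becomes a one-line computation (with $L=\lfloor\log_2 g\rfloor$ and $n=Nim(G)$, you get $Nim(G')=2^{L+2}-n$, $Nim(G'')=2^{L+3}-n$, and their XOR is $2^{L+2}$), and the point the paper treats only loosely---that the process cannot have created ``1''s above the distinguished digit, which is exactly what the single-bit-difference step needs---is automatic, since your formula pins $Nim(G')$ strictly between $2^{L+1}$ and $2^{L+2}$. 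It also makes the companion statement for $Nim(G)=0$ immediate from $g_i(0)=0$. A further point in your favor: you follow the theorem statement's indexing literally (loop over $i=0,\dots,L$, final $2^{L+2}$-tower), whereas the paper's own proof runs one extra iteration and ends with a $2^{L+3}$-tower; both variants work, but your computation verifies the procedure exactly as stated.
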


\bigskip

\begin{proof}
	The first step of the construction of $\neg G$ is to add a bottom vertex to G and then to put a 1-tower (a single vertex) on the side. Then, add 2 bottom vertices and then a 2-tower to the side. We will keep repeating this process for powers of 2. Since we are using powers of 2, this only acts on a single digit at a time, although carrying that results from the addition could modify digits further to the left. If we continue this process, then there will always be a binary digit \textquotedblleft1" to the left of the last digit modified directly. We will show this by examining the step in which we add $2^k$ and then nim-add $2^k$ to a nim-value $Nim(G)$. $2^k$ has only one digit in binary, so it is easy to add and nim-add with it. 
\begin{center}
Case 1:
\begin{align*}
a&0b\\
+&100000...\\
\oplus&100000...\\
=a&0b\\
\end{align*}
\end{center}
So if the $2^k$th digit of $Nim(G)$ is \textquotedblleft0", then there is no change to $Nim(G)$. The \textquotedblleft1" to the left has not been changed and there is still a \textquotedblleft1" to the left.
\begin{center}
Case 2:
\begin{align*}
a0&1b\\
+&100000...\\
\oplus&100000...\\
=a1&1b\\
\end{align*}
\end{center}
If the $2^k$th digit of $Nim(G)$ is \textquotedblleft1" and there is a \textquotedblleft0" to its immediate left, then that \textquotedblleft0" gets changed to a \textquotedblleft1". A new \textquotedblleft1" has been generated to the left.
\begin{center}
Case 3:
\begin{align*}
a0111...111&1b\\
+&100000...\\
\oplus&100000...\\
=a1000...000&1b\\
\end{align*}
\end{center}
If the $2^k$th digit of $Nim(G)$ is \textquotedblleft1" and there is a string of \textquotedblleft1"s of any non-zero length to its immediate left, then that string of \textquotedblleft1"s is changed to a string of \textquotedblleft0"s and the \textquotedblleft0" to the left of the string is changed to a \textquotedblleft1". A new \textquotedblleft1" has been generated to the left. 

Because there are only $g$ possible moves, $ Nim(G) \le g$. Therefore, when $Nim(G)$ is expressed in binary notation, it has at most $\lfloor {Log_2(g)} \rfloor+1$ digits. Let the continuation of this process for $\lfloor {Log_2(g)} \rfloor+2$ total steps result in game $G'$ with nim-value Nim(G'). We will have acted on every digit in the original nim-value, with $2^{\lfloor {Log_2(g)} \rfloor+1}$ being the last digit that was directly acted on. There still will be a "1" to the left of the last digit modified directly. This digit could not have been part of the original nim-value, and since a new "1" can only be created to the immediate left of a "1", there must have been a "1" in the $2^{\lfloor {Log_2(g)} \rfloor+2}$ place during some point of the process. If that "1" was created by case 3, then only "0"s would be acted on from that point on, so the nim-value would not change anymore. If that "1" was created by case 2, then it was created in the last step, and thus the nim-value did not change after that. So, there is a "1" in the $2^{\lfloor {Log_2(g)} \rfloor+2}$ digit of $Nim(G')$ place.

We then construct another copy of $G'$ using the same process. With this copy of $G'$, add $2^{\lfloor {Log_2(g)} \rfloor+2}$ bottom verticies and then put a $2^{\lfloor {Log_2(G')} \rfloor+2}$-tower on the side, and let the resulting game be $G''$. Then, we put $G'$ next to $G''$ to make game $G'''$.
\begin{center}
\begin{align*}
Nim(G')=&01c\\
Nim(G'')=&11c\\
Nim(G''')=Nim(G') \oplus Nim(G'')=&1000...=2^{\lfloor {Log_2(g)} \rfloor+3}
\end{align*}
\end{center}
$Nim(G'')$ only differs from $Nim(G')$ in a single, known digit, so when they are nim-added, the resulting game $Nim(G''')$ has a nim-value of $2^{\lfloor {Log_2(g)} \rfloor+3}$. The final step is to put a $2^{\lfloor {Log_2(g)} \rfloor+3}$-tower next to $G'''$, resulting in game $G''''$ that has nim-value 0 if the original game was winning.
\end{proof}

Now, we will look at how the same exact process would affect a game with nim-value zero. 

\begin{thm}
For any poset game $G$, if $Nim(G) = 0$, the procedure described earlier will create a game $\neg G$ with $Nim(\neg G) \neq 0$
\end{thm}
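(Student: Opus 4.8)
The plan is to trace the same construction through, but now starting from $Nim(G)=0$, and to pinpoint exactly where the behavior diverges from the nonzero case. The key observation is that in step~2 of the procedure, the operation ``add $2^i$ bottom vertices, then nim-add a $2^i$-tower'' acts on a nim-value of $0$ very differently than on a nonzero nim-value: adding $2^i$ bottom vertices turns a nim-value $0$ into $2^i$, and then nim-adding $2^i$ turns it back to $0$. So I would first prove the invariant that, as long as the running nim-value is $0$, each pass of the loop in step~2 leaves it at $0$; this is an easy induction on $i$ using only Lemma~1 (each bottom vertex adds $1$, so $2^i$ of them add $2^i$) and Lemma~2 together with the Sprague--Grundy additivity already quoted in the introduction. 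Hence at the end of step~2 we have $Nim(G')=0$, in contrast with the nonzero case where $Nim(G')$ had a ``1'' in the $2^{\lfloor Log_2(g)\rfloor+2}$ digit.

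Next I would carry $G'$ through steps~4 and~5. Since $Nim(G')=0$, step~4 produces $G''$ with $Nim(G'')=(0+2^i)\oplus 2^i=0$ as well, where $i=\lfloor Log_2(g)\rfloor+2$ at that point; then step~5 puts $G'$ next to $G''$ (giving $G'''$ with nim-value $0\oplus 0=0$) and next to a $2^i$-tower with the incremented $i=\lfloor Log_2(g)\rfloor+3$. By Lemma~2 and additivity, $Nim(\neg G)=0\oplus 0\oplus 2^{\lfloor Log_2(g)\rfloor+3}=2^{\lfloor Log_2(g)\rfloor+3}\neq 0$. That is the whole argument: the lone surviving $2^{\lfloor Log_2(g)\rfloor+3}$-tower from the last step is cancelled by the ``1'' that $G'''$ carries in the nonzero case, but there is nothing to cancel it when $G$ was losing, so it survives and makes $\neg G$ a first-player win.

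The one genuine subtlety — the step I expect to be the main obstacle — is justifying that Lemma~1 still applies at every iteration. Lemma~1 as stated adds a single bottom vertex, and the loop adds $2^i$ of them ``in sequence''; I need each added vertex to be genuinely below all vertices present \emph{at the time it is added}, and I need to confirm (as the text notes in the paragraph before the theorem) that later additions and the tower juxtapositions do not retroactively invalidate earlier bottom vertices. This is really a bookkeeping point about the order of operations in the construction, and I would handle it by spelling out that ``add $2^i$ bottom vertices'' means $2^i$ successive applications of the add-a-bottom-vertex operation of Lemma~1, each to the current game, so the lemma applies verbatim and contributes $+2^i$ to the nim-value; the tower attachments are handled purely by Sprague--Grundy additivity and never interfere. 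Once that is nailed down, the chain of equalities $0\mapsto 0\mapsto\cdots\mapsto 0\mapsto 2^{\lfloor Log_2(g)\rfloor+3}$ is immediate, and since $Nim(\neg G)\neq 0$ the first player has a winning strategy, as claimed.
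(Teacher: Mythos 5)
Your proposal is correct and follows essentially the same route as the paper's own proof: you trace the construction with the invariant that $+2^i$ followed by $\oplus\, 2^i$ cancels on a nim-value of $0$, so $Nim(G')=Nim(G'')=Nim(G''')=0$ and only the final $2^{\lfloor Log_2(g)\rfloor+3}$-tower survives, giving $Nim(\neg G)=2^{\lfloor Log_2(g)\rfloor+3}\neq 0$. Your extra bookkeeping about applying Lemma~1 once per added bottom vertex is a more careful spelling-out of what the paper leaves implicit, not a different argument.
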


\begin{proof}
Since adding a number to 0 and then nim-adding the same number cancel each other out, nearly every step had no effect on the nim-value.
\begin{center}
\begin{align*}
Nim(G'')&=Nim(G')=Nim(G)=0 \\
Nim(G''')&=Nim(G') \oplus Nim(G'')=0 \oplus 0=0\\
Nim(G'''')&=Nim(G''')+2^{\lfloor {Log_2(g)} \rfloor+3}=2^{\lfloor {Log_2(g)} \rfloor+3}
\end{align*}
\end{center}
We have found that if $Nim(G)$ was positive, $Nim(G'''')$ is 0, and if $Nim(G)$ was 0, $Nim(G'''')$ is positive. The first player had a winning strategy for $G''''$ if he did not have a winning strategy for $G$, and he will not have a winning strategy for $G''''$ if he did have a winning strategy for $G$. Therefore, $G''''$=$\neg G$.
\end{proof}

The relation between any two vertices in the graph can easily be computed in polylogarithmic time given the edges of the old graph. If one vertex was added as a bottom vertex, it can be checked if the other vertex was added before it or not. If a vertex was added as part of a side tower, it can easily be checked whether the other vertex is higher in the tower. If a vertex was in the original grame, then the other vertex can only be greater than it if it was greater in the original game.

\begin{lem}
The size of $\neg G$ is only a linear blowup from $G$
\end{lem}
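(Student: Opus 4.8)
The plan is to track the vertex count through each step of the construction in Theorem~1 and observe that every operation adds at most $O(g)$ vertices, and that the number of iterations is logarithmic in $g$, so the total blowup is linear. Let $g$ be the number of vertices of $G$. First I would note that the two elementary operations have controlled cost: adding $2^i$ bottom vertices adds exactly $2^i$ vertices, and placing a $2^i$-tower alongside the current game adds exactly $2^i$ vertices. So one pass through the body of the loop in Step~2 at index $i$ costs $2^i + 2^i = 2^{i+1}$ new vertices.

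Next I would sum the cost over the loop. The loop runs for $i = 0, 1, \dots, \lfloor \mathrm{Log}_2(g)\rfloor + 1$ (the condition $2^i \le g$ plus the extra step that produces $G'$), so the total number of vertices added to the single copy of $G$ in forming $G'$ is
\begin{align*}
\sum_{i=0}^{\lfloor \mathrm{Log}_2(g)\rfloor + 1} 2^{i+1} &= 2^{\lfloor \mathrm{Log}_2(g)\rfloor + 3} - 2 < 8g.
\end{align*}
Hence $|G'| < g + 8g = 9g = O(g)$. Then I would account for Steps~4 and~5: forming $G''$ uses a second copy of $G'$ (another $O(g)$ vertices), plus $2^i$ bottom vertices and a $2^i$-tower with $i = \lfloor \mathrm{Log}_2(g)\rfloor + 2$, i.e.\ another $O(g)$; and Step~5 appends a $2^{\lfloor \mathrm{Log}_2(g)\rfloor + 3}$-tower, which is again $O(g)$. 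Adding a constant number of $O(g)$ terms gives $|\neg G| = O(g)$, which is the claimed linear blowup.

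The only point requiring any care — and the mild ``obstacle'' — is bookkeeping the index ranges precisely: making sure the loop termination condition ``$2^i \le g$'' together with the post-loop increments yields a largest exponent of exactly $\lfloor \mathrm{Log}_2(g)\rfloor + 3$ in the final tower, and confirming that each geometric sum telescopes to something bounded by a fixed constant times $g$ rather than, say, $g\log g$. Because the exponents in the towers and bottom-vertex batches double each round, the dominant term is the last one and the geometric series is bounded by twice its final term, so no logarithmic factor appears. I would close by remarking that, since $\neg G$ has $O(g)$ vertices and (as noted just before this lemma) each pairwise comparison can be decided in polylogarithmic time, $\neg G$ can be written down explicitly in time polynomial in $g$, which is what the efficiency claim in the abstract requires.
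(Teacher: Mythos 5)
Your proposal is correct and follows essentially the same route as the paper: track the vertex count of $G'$, $G''$, $G'''$, and $\neg G$, bound each geometric sum of added bottom vertices and side towers by a constant multiple of $g$ using $2^{\lfloor \mathrm{Log}_2(g)\rfloor} \le g$, and conclude a linear blowup (the paper just carries explicit constants, ending at $26g$, where you state $O(g)$). No gaps.
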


\begin{proof}
\begin{center}
\begin{align*}
Size(G')=&Size(G)+2*\displaystyle\sum\limits_{i=0}^{\lfloor {Log_2(g)} \rfloor+1} 2^i\\
=&g+2^{\lfloor {Log_2(g)} \rfloor+2}-2 \leq 5g\\
Size(G'')=&Size(G')+2*2^{\lfloor {Log_2(g)} \rfloor+2}\\
\leq&5g+8g=13g\\
Size(G''')=&Size(G')+Size(G'')\\
\leq&5g+13g=18g\\
Size(\neg G)=&Size(G''')+2^{\lfloor {Log_2(g)} \rfloor+3}\\
\leq&18g+8g=26g \\ \\
\end{align*}
\end{center}
This shows that it is possible to construct the $not$ of a game with only a linear blowup. 
\end{proof}

\section{Reduction of Boolean Formulas to Poset Games}

\subsection{Constructing OR and AND Gates}

We will now show how to construct $A \cup B$ of two games $A$ and $B$ with only a linear blowup. Computing the OR of poset games is a folklore result, but we give the construction and proof for completeness. Given game $A$ with elements {$a_1,a_2,a_3,..., a_v$} and game $B$ with elements {$b_1,b_2,b_3,..., b_w$}, game $A \cup B$ has elements {$a_1,a_2,a_3,...,a_v,b_1,b_2,b_3,...,b_w$} and the additional set of relations $b_i \ge a_j$

%\begin{figure}[ht]
%\begin{minipage}[b]{0.3\linewidth}
%\centering
%\includegraphics[scale=1]{AorB.jpg}
%\end{minipage}
%\hspace{0.5cm}
%\begin{minipage}[b]{0.65\linewidth}

%\hspace{.25in} 
If the first player can win $A$, he plays the winning move in game $A$. $B$ disappears, so only the remainder of game $A$ is being played, and since the first player just played the winning move there, he will win. If the first player cannot win game $A$, but he can win game $B$, he can play the winning move in game $B$. Both players would try to avoid playing in game $A$, since any move there is losing. So, they both play in game $B$, trying to force the other to make the first move in $A$. Since the first player can win game $B$, he can force the second player to play first in game $A$, so will be able to win. If both game $A$ and game $B$ are losing, then the second player can force the first player to make the first move in $A$, so the first player will not be able to win. We see that the first player will win this game if and only if he can win game $A$ or game $B$. 
%\end{minipage}
%\end{figure}

It is possible to construct an OR of two games and a NOT of a game with only a linear blowup, and since $\neg ( \neg A \cup \neg B )$=$A \cap B$, we can construct the AND of two games with only a linear blowup by using ORs and NOTs. Before our results, there was no known efficient procedure for computing the AND of poset games.

\subsection{Reducing $NC^1$ circuits to Poset Games}

Given a boolean formula, we can represent false variables with a poset game of two isolated vertices, a game which is losing for the first player. We can represent true variables with a single-vertex game, which is winning for the first player. By using the constructions we have shown, we can model Boolean circuits with bounded fan-in. The result of a boolean formula will be a single poset game, which will be a first-player win if and only if the formula evaluates to true. If the boolean formula is one that can be evaluated by log-depth circuits, the poset game it is reduced to will have polynomial-size. This shows that finding the winning player of a poset game is an $NC^1$-hard problem.

%\section{Conclusion}
%We have presented a method for flipping the winner of a poset game. The relatively small blowup from computing this comes as somewhat of a surprise, and suggests that poset games may not be PSPACE-complete. We conclude that the problem of computing the winning player of a poset game is as hard as computing a boolean formula. This establishes the first nontrivial lower bound on the complexity of poset games, which is an important result.

\section{Future Work}
The gap between $NC^1$-hard and PSPACE is very large, and closing the gap is a possible topic for future work. The method of modifying poset games used in this paper might be used to reduce to and from other problems. Results by Byrnes \cite{Byrnes03} suggest that poset games may not be PSPACE-complete, and could in fact be far easier. Expanding periodicity theorems to multiple rows of chomp might lead to a way to calculate the nim-value of a generic poset game.

%\subsection{Inquiry Process}
%This research project helped me learn how to read mathematical papers, a process I found to be far more difficult than I though it would be. I learned how to approach a mathematical problem, and I learned how to write a mathematical paper. I improved the bounds on poset games, which was my goal in this project. I chose to use nim-values to search for a $not$, which allowed me to only find one process for modifying poset games. I overcame problems that arose from misunderstanding which reductions were computable in polylogarithmic time, until I found a reduction, the one presented in this paper, that was.

\nocite{Soltys10}

\nocite{WinningWays}

\section{Acknowledgements}
This work would not have happened without the guidance and supervision of Dr. Lance Fortnow. He introduced me to the problem and advised me in preparing this paper. I would also like to thank Dr. Robert Sloan and Dr. Gyurgy Turan for introducing me to Professor Fortnow, and Steve Fenner for his useful comments on this paper. I am grateful to my past and current teachers, to MathPath and Canada/USA Mathcamp, and to the Illinois Mathematics and Science Academy's SIR program. Professor Fortnow is partially supported by NSF grants CCF-0829754 and DMS-0652521.

\bibliographystyle{amsplain}
\bibliography{PosetgamesareNC1hard}

\providecommand{\bysame}{\leavevmode\hbox to3em{\hrulefill}\thinspace}
\providecommand{\MR}{\relax\ifhmode\unskip\space\fi MR }
% \MRhref is called by the amsart/book/proc definition of \MR.
\providecommand{\MRhref}[2]{%
  \href{http://www.ams.org/mathscinet-getitem?mr=#1}{#2}
}
\providecommand{\href}[2]{#2}
\begin{thebibliography}{1}

\bibitem{bouton1901nim}
C.L. Bouton, \emph{{Nim, a game with a complete mathematical theory}}, Annals
  of Mathematics \textbf{3} (1901), no.~1, 35--39.

\bibitem{Byrnes03}
S.~Byrnes, \emph{Poset game periodicity}, INTEGERS: The Electronic Journal of
  Combinatorial Number Theory \textbf{3} (2003).

\bibitem{WinningWays}
J.~Conway E.~Berlekamp and R.~Guy, \emph{Winning ways for your mathematical
  plays}, Academic Press, New York, 1982.

\bibitem{Gale74}
D.~Gale, \emph{A curious nim-type game}, Amer. Math. Monthly \textbf{81}
  (1974), 876--879.

\bibitem{Grundy39}
P.M. Grundy, \emph{Mathematics and games}, Eureka (University of Cambridge
  Magazine) \textbf{2} (1939), 6--8.

\bibitem{Schuh52}
F.~Schuh, \emph{Spel van delers (game of divisors)}, Nieuw Tijdschrift voor
  Wiskunde \textbf{39} (2003), 299.

\bibitem{Soltys10}
M.~Soltys and C.~Wilson, \emph{On the complexity of computing winning
  strategies for finite poset games}, Theory of Computing Systems \textbf{48}
  (2010), no.~3, 680--692.

\bibitem{Sprague35}
R.P. Sprague, \emph{Uber mathematische kampfspiele}, Tohoku Mathematical
  Journal, First Series \textbf{41} (1935), 438--444.

\end{thebibliography}

\end{document}